\definecolor{purple}{RGB}{128,0,128}
\newcommand{\new}{\textcolor{blue}}
\newcommand{\rev}{\textcolor{black}}
\newcommand{\chc}{\textcolor{black}}
\newtheorem{proposition}{{Proposition}}
\newenvironment{proof}{{\noindent\it Proof:}}{\hfill $\square$\par}
\begin{document}

\title{Channel State Information-Free Location-Privacy Enhancement: Delay-Angle Information Spoofing\\
} 
\vspace{-5pt}
\author{\IEEEauthorblockN{Jianxiu Li and
Urbashi Mitra}
{Department of Electrical and Computer Engineering, University of Southern California, CA, USA}\\
E-mail: \{jianxiul, ubli\}@usc.edu\thanks{\chc{This work has been funded by one or more of the following: the USC + Amazon Center on Secure and Trusted Machine Learning, DOE DE-SC0021417, Swedish Research Council 2018-04359, NSF CCF-2008927, NSF RINGS-2148313, NSF CCF-2200221, NSF CIF-2311653, ARO W911NF1910269, ONR 503400-78050, and ONR N00014-15-1-2550.}}\vspace{-5pt}}
\maketitle

\begin{abstract}
In this paper, a delay-angle information spoofing (DAIS) strategy is proposed for location-privacy enhancement. By shifting the location-relevant delays and angles without the aid of channel state information (CSI) at the transmitter, the eavesdropper is obfuscated by a physical location that is distinct from the true one. A precoder is designed to preserve location-privacy while the legitimate localizer can remove the obfuscation with the securely \textit{shared information}. Then, a lower bound on the localization error is derived via the analysis of the \textit{geometric mismatch} caused by DAIS, validating the enhanced
location-privacy. The statistical hardness for the estimation of the shared information is also investigated to assess the robustness to the potential leakage of the designed precoder structure. Numerical comparisons show that the proposed DAIS scheme results in more than $15$ dB performance degradation for the illegitimate localizer at high signal-to-noise ratios, which is comparable to a recently proposed CSI-free location-privacy enhancement strategy and is less sensitive to the precoder structure leakage than the prior approach. 
\end{abstract}

\begin{IEEEkeywords}
Localization, location-privacy, channel state information, spoofing, precoding.
\end{IEEEkeywords}

\vspace{-5pt}
\section{Introduction} \vspace{-3pt}
Thanks to the large bandwidth and limited multipath, millimeter wave (mmWave) signals \cite{Rappaport} have been widely employed to infer the location of user equipment (UE) in a multi-antenna system. \rev{If} the delay and angle information \rev{can be} precisely estimated from these wireless signals, centimeter-level localization accuracy is achievable with a single authorized device (AD)  \cite{Shahmansoori,Zhou,FascistaMISOML,Li}. However, how to preserve location-privacy is not considered in these designs \cite{Shahmansoori,Zhou,FascistaMISOML,Li} whose main focus is localization accuracy\rev{. D}ue to the nature of the propagation of electromagnetic waves, once these wireless signals are eavesdropped, the location of the UE is sensitive to \rev{exposure} to unauthorized devices (UD) that can \rev{potentially} further infer more private information, \textit{e.g.,} personal preference \cite{Ayyalasomayajula,li2023fpi}.

To limit the privacy leakage at the physical layer, the statistics of the channel, or the actual channel, has been leveraged for security designs \cite{Goel,Tomasin2,dacosta2023securecomm,Goztepesurveyprivacy,Checa, Ayyalasomayajula,li2023fpi,Ardagnaobfuscation}. Specific to protecting the location-relevant information from being easily snooped from the wireless signals, the prior strategies \cite{Goel,Tomasin2,Checa,Ayyalasomayajula} either inject artificial noise that is in the \textit{null space} of \rev{the} legitimate channel to decrease the received signal-to-noise ratio (SNR) for the UDs \cite{Goel,Tomasin2} or hide key delay and angle information via transmit beamforming to obfuscate the UDs \cite{Checa,Ayyalasomayajula}. However, all these designs in \cite{Goel,Tomasin2,Checa,Ayyalasomayajula} rely on accurate channel state information (CSI)\rev{;} acquiring such knowledge increases the overhead for resource-limited UEs.

{To enhance location-privacy without CSI, fake paths are designed in \cite{li2023fpi} and virtually injected to the channel via a precoding design as a form of jamming 
\cite{Li2,Li}. By virtue of the fake path injection (FPI) strategy \cite{li2023fpi}, accurately estimating the location-relevant parameters becomes statistically harder for an eavesdropper when the injected paths are highly correlated with the true paths. However, such a jamming design does not directly hide the location information itself, so location snooping is still possible at high SNRs especially when the bandwidth and the number of antennas are quite large. If the structure of the {precoder} is unfortunately leaked to the UD, the associated precoding matrix can be inferred with enough measurements, undermining} the efficacy of \cite{li2023fpi}.  Herein, motivated by the obfuscation technique in \cite{Ardagnaobfuscation}, we propose a delay-angle information spoofing (DAIS) strategy to enhance location-privacy without CSI. {DAIS \textit{virtually} moves the UE to an {\bf incorrect} location  which can be far from the true one. In contrast, the prior work \cite{li2023fpi} does not introduce \textit{geometric mismatch} though a challenging estimation framework is created. {As a result of the differences, in the current work, we develop a \textit{mismatched} Cram\'{e}r-Rao bound (MCRB) \cite{Fortunatimismatchsurvey} for the estimation error, versus a true Cram\'{e}r-Rao bound (CRB) in \cite{li2023fpi}}. Our theoretical analysis shows the amount of obfuscation possible via DAIS. This new strategy is also more robust to the leakage of the precoder structure.} 
The main contributions of this paper are: \vspace{-0.5pt}
\begin{enumerate}
\item A general framework is introduced to preserve location-privacy with DAIS, where all the location-relevant delays and angles are shifted without CSI such that the UD {is} obfuscated.
\item To spoof the UD with the shifted delays and angles, a {new} CSI-free precoding strategy is proposed, {distinct from \cite{li2023fpi}}; a design for the information  secretly shared with the AD is also provided to ensure performance.
\item {A MCRB is derived for DAIS, with a closed-form \chc{expression} for the pseudo-true (incorrect) locations, theoretically validating the efficacy of \chc{the} proposed scheme.} 
\item The impact of leaking the structure of the designed precoder to the UD is studied, which shows the robustness of the proposed scheme.
\item Numerical comparisons with the localization accuracy of the AD are provided, showing that more than $15$ dB performance degradation is achievable at high SNRs for the UD, due to the proposed DAIS method.
\vspace{-0.5pt}
\end{enumerate} 

We use the following notation. Scalars are denoted by lower-case letters $x$ and column vectors by bold letters $\bm{x}$. The $i$-th element of $\bm{x}$ is denoted by $\bm{x}[i]$. Matrices are denoted by bold capital letters $\bm X$ and $\boldsymbol{X}[i, j]$ is the ($i$, $j$)-th element of $\bm{X}$. The operators $|x|$, $\|\bm  x\|_{2}$, $\mathfrak{R}\{x\}$, $\mathfrak{I}\{x\}$, $\lfloor{x}\rfloor$, and $\operatorname{diag}(\mathcal{A})$ stand for the magnitude of $x$, the $\ell_2$ norm of $\bm x$, the real part of $x$, the imaginary part of $x$, the largest integer that is less than $x$, and a diagonal matrix whose diagonal elements are given by $\mathcal{A}$, respectively. $(x)_{(t_1,t_2]}$ with $t_1<t_2$ is defined as $(x)_{(t_1,t_2]}\triangleq x-\left\lfloor\frac{x-t_1}{t_2-t_1}\right\rfloor(t_2-t_1)$
and $\mathbb{E}\{\cdot\}$ \rev{denotes} the expectation of a random variable. The operators $\operatorname{Tr}(\cdot)$, $(\cdot)^{-1}$, $(\cdot)^\mathrm{T}$, and $(\cdot)^{\mathrm{H}}$ are defined as \rev{the trace, the inverse, the transpose, and the conjugate transpose of a vector or matrix, respectively}.

\vspace{-5pt}
\section{System Model}\label{sec:systemmodel}\vspace{-3pt}
\begin{figure}[t]
\centering
\includegraphics[scale=0.455]{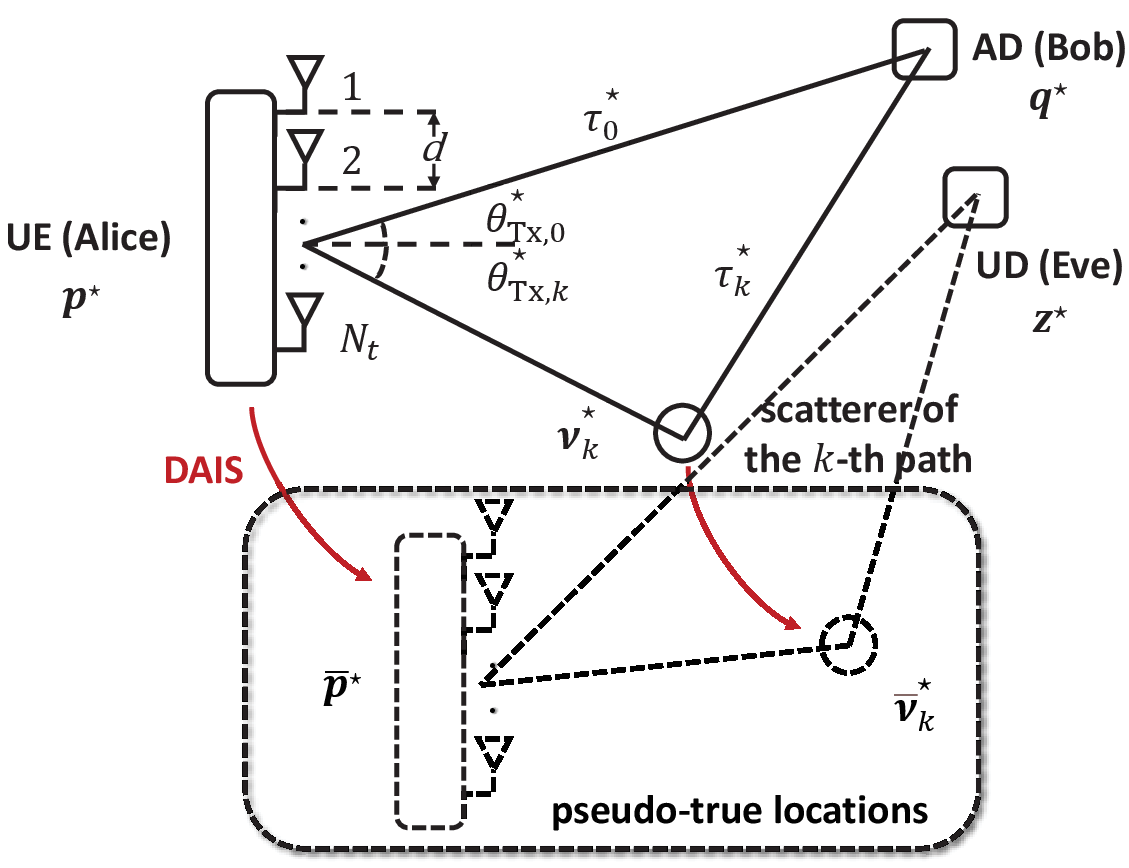}\vspace{-10pt}
\caption{\chc{System model.}}\vspace{-15pt}
\label{fig:sys}
\end{figure}
As shown in Figure \ref{fig:sys}, we consider a system model similar to \cite{li2023fpi}, where an \rev{AD} (Bob), at a location $\bm q^{\star}=[q^{\star}_x, q^{\star}_y]^{\mathrm{T}} \in \mathbb{R}^{2\times 1}$, serves a \rev{UE} (Alice) at an unknown position $\bm p^{\star} = [p^{\star}_x, p^{\star}_y]^{\mathrm{T}} \in \mathbb{R}^{2\times 1}$. To provide the location-based services, after Alice transmits pilot signals through a public channel, Bob can infer Alice's location \rev{from} the received signal. Unfortunately, there is an \rev{UD} (Eve), at a position $\bm z^{\star}=[z^{\star}_x, z^{\star}_y]^{\mathrm{T}}\in\mathbb{R}^{2\times 1}$, who can eavesdrop on the public channel to \rev{also} estimate Alice's location. We assume both Bob and Eve know the pilot signals as well as their own locations so Eve’s malicious inference jeopardizes Alice's location-privacy if no location-privacy preservation mechanisms are adopted.

\chc{Herein}, mmWave multiple-input-single-output (MISO) orthogonal frequency-division multiplexing (OFDM) signaling is used for the transmissions. Accordingly, Alice has $N_t$ antennas while both Bob and Eve are equipped \rev{with} a single antenna. Denoting by $N$ and $G$ the number of sub-carriers and the number of the transmitted signals, respectively, we express the $g$-th symbol transmitted over the $n$-th sub-carrier as $x^{(g,n)}$ and the corresponding beamforming vector as $\bm f^{(g,n)}\in\mathbb{C}^{N_t\times1}$ Then, the pilot signal can be written as $\boldsymbol{s}^{(g,n)}\triangleq \boldsymbol{ f}^{(g,n)}x^{(g,n)}\in\mathbb{C}^{N_t\times 1}
$ and the received signal is given by \vspace{-5pt}
\begin{equation}
{y}^{(g,n)}=\boldsymbol{h}^{(n)} \boldsymbol{s}^{(g,n)}+{w}^{(g,n)},\label{eq:rsignal}\vspace{-5pt}
\end{equation}
for $n = 0, 1, \cdots, N-1$ and $g = 1, 2, \cdots, G$, where $\bm h^{(n)}\in \mathbb{C}^{1\times N_t}$ is the $n$-th sub-carrier public channel vector while ${w}^{(g,n)}\sim \mathcal{CN}({0},\sigma^2)$ represents independent, zero-mean, complex Gaussian noise with variance $\sigma^2$.

We assume that there exist $K$ non-line-of-sight (NLOS) paths in the channel, apart from an available line-of-sight (LOS) path. The $k$-th NLOS path is produced by a scatterer at an unknown position $\bm v^{\star}_k = [v^{\star}_{k,x},v^{\star}_{k,y}]^{\mathrm{T}}\in\mathbb{R}^{2\times 1}$, with $k = 1,2,\cdots,K$. Denote by $c$, $\varphi_c$, $B$, and $T_s\triangleq\frac{1}{B}$, the speed of light, carrier frequency, bandwidth, and sampling period, respectively. A narrowband channel is considered in this paper, \textit{i.e.}, $B\ll \varphi_c$, and the public channel vector $\bm h^{(n)}$ can be modeled as \cite{li2023fpi,FascistaMISOMLCRB,FascistaMISOML}\vspace{-5pt}
\begin{equation}
\boldsymbol{h}^{(n)}\triangleq\sum_{k=0}^{K}\gamma^{\star}_k e^{\frac{-j 2\pi n\tau^{\star}_k}{N T_{s}}}\boldsymbol{ \alpha}\left(\theta^{\star}_{\mathrm{Tx},k}\right)^{\mathrm{H}},\vspace{-5pt}
\label{eq:channel_subcarrier}
\end{equation}
where $k=0$ corresponds to the LOS path while $\gamma^{\star}_k$, $\tau^{\star}_k$, and $\theta^{\star}_{\mathrm{Tx},k}$ represent the complex channel coefficient, the time-of-arrival (TOA), and the angle-of-departure (AOD) of the $k$-th path, respectively. The steering vector $\boldsymbol{ \alpha}\left(\theta^{\star}_{\mathrm{Tx},k}\right)\in\mathbb{C}^{N_t\times1}$ is defined as $\bm\alpha(\theta^{\star}_{\mathrm{Tx},k})\triangleq\left[1, e^{-j\frac{2\pi d\sin(\theta^{\star}_{\mathrm{Tx},k})}{\lambda_c}}, \cdots, e^{-j\frac{2\pi(N_t-1)d\sin(\theta^{\star}_{\mathrm{Tx},k})}{\lambda_c}}\right]^{\mathrm{T}}$, for $k=0,1,\cdots,K$, where $\lambda_c\triangleq\frac{c}{\varphi_c}$ is the wavelength and $d$ is the distance between antennas, designed as $d=\frac{\lambda_c}{2}$. Define $\bm v^{\star}_0\triangleq \bm q^{\star}$ (or $\bm v^{\star}_0\triangleq \bm z^{\star}$) for Bob (or Eve). From the geometry, the TOA and AOD of the $k$-th path are given by\footnote{It is assumed that the orientation angle of the antenna array and the clock bias are known to both Bob and Eve as prior information; without loss of generality, these parameters are set to zero in this paper.}\vspace{-5pt}
\begin{equation}\label{eq:geometry}
\begin{aligned}
\tau^{\star}_{k} &=\frac{\left\|\boldsymbol{v}^{\star}_0-\boldsymbol{v}^{\star}_{k}\right\|_{2} +\left\|\boldsymbol{p}^{\star}-\boldsymbol{v}^{\star}_{k}\right\|_{2}} {c}\\
\theta^{\star}_{\mathrm{Tx}, k} &=\arctan \left(\frac{v^{\star}_{k,y}-p^{\star}_{y}} {v^{\star}_{k,x}-p^{\star}_{x}}\right)\rev{,}
\end{aligned}\vspace{-5pt}
\end{equation} 
where we assume ${\tau^{\star}_k} \in(0,NT_s]$ and $\theta^{\star}_{\mathrm{Tx},k}\in(-\frac{\pi}{2},\frac{\pi}{2}]$ \cite{Li}. 

{Given the noise level characterized by $\sigma^2$}, once the received signals are collected, Alice's location can be inferred with the pilot signals. To enhance location-privacy, we aim to increase Eve’s localization error\rev{. Alice will transmit} the  shared information over a secure channel to maintain Bob's localization accuracy. Note that, CSI is assumed to be unavailable to Alice. 

\vspace{-5pt}
\section{Delay-Angle Information Spoofing for Location-Privacy Enhancement}\label{sec:csifreedesign}\vspace{-3pt}
For the model-based localization designs, such as {\cite{Shahmansoori,Li,FascistaMISOMLCRB,FascistaMISOML}}, channel parameters are typically estimated in the first stage and then the location is inferred from the location-relevant channel parameters, \textit{i.e.,} TOAs and AODs, according to the geometry given in Equation \eqref{eq:geometry}. Hence, high localization accuracy not only relies on super-resolution channel estimation \cite{Li}\rev{,} but also requires the knowledge of \rev{the} geometric model. To degrade Eve's localization accuracy, we propose to obfuscate Eve with a mismatched geometric model achieved via a \rev{delay-angle information spoofing} strategy.

\vspace{-5pt}
\subsection{Delay-Angle Information Spoofing}\label{sec:dis} \vspace{-2.5pt}
Let $\Delta_\tau$ and $\Delta_\theta$ represent two constants used for the DAIS design. To prevent Alice's location from being accurately inferred by Eve from the estimate of the location-relevant channel parameters, the TOAs and AODs of the paths in Eve’s channel are shifted according to $\Delta_\tau$ and $\Delta_\theta$, respectively, as \vspace{-5pt} 
\begin{equation}\label{eq:mismatchgeometry}
\begin{aligned}
\bar{\tau}_{k}& = \left(\tau^{\star}_{k} + \Delta_\tau\right)_{(0,NT_s]}\\
\bar{\theta}_{\mathrm{Tx}, k}&= \arcsin\left({\left(\sin(\theta^{\star}_{\mathrm{Tx}, k})+\sin(\Delta_\theta)\right)_{\left(-1,1\right]}}\right).
\end{aligned}\vspace{-5pt}
\end{equation} 

Though Eve is assumed to know the clock bias and the orientation angle, by shifting the TOAs and AODs during the transmission of the pilot signals, the proposed DAIS scheme misleads Eve into treating another physical location as the true one if she exploits the geometric model in Equation \eqref{eq:geometry} for localization. The degraded localization accuracy will be analyzed in Section \ref{sec:degradationdais}. Note that, since such shifts do not reply on the channel parameters, CSI is not needed. 
\vspace{-5pt}
\subsection{Precoder Design}\vspace{-2.5pt}

To protect location-privacy with the DAIS in Section \ref{sec:dis}, the mmWave MISO OFDM signaling is still employed but we design {a precoding matrix $\bm\Phi^{(n)}\in\mathbb{C}^{N_t\times N_t}$ as\footnote{{Though this new precoder is similar to part of the precoding matrix in \cite{li2023fpi}, defined as ${\bm \Phi}_{\text{FPI}}^{(n)} \triangleq\bm I_{N_t} + e^{-\frac{2\pi n \bar\delta_\tau}{NT_s}} \operatorname{diag}\left(\bm \alpha\left(\bar\delta_{\theta_{\text{TX}}}\right)^{\mathrm{H}}\right)$, where $\bm I\in\mathbb{R}^{N_t\times N_t}$ is an identity matrix while $\bar\delta_\tau$ and $\bar\delta_{\theta_{\text{TX}}}$ are two design parameters for FPI, the design of the parameters, \textit{i.e.,} $\Delta_\tau$, $\Delta_{\theta}$ versus $\bar\delta_\tau$, $\bar\delta_{\theta_{\text{TX}}}$, and the associated analyses in these two works are quite different.}}\vspace{-5pt}
\begin{equation}\label{eq:daisbeamformer}
    {\bm \Phi}^{(n)} \triangleq  e^{-\frac{\chc{j}2\pi n \Delta_\tau}{NT_s}} \operatorname{diag}\left(\bm \alpha\left(\Delta_\theta\right)^{\mathrm{H}}\right),\vspace{-5pt}
\end{equation}
for $n=0,1,\cdots,N-1$}. Then, through the public channel, the received signal can be re-expressed as \vspace{-5pt}
\begin{equation}
\begin{aligned}
\bar{y}^{(g,n)}
&=\boldsymbol{h}^{(n)}{{\bm \Phi}^{(n)}}{\boldsymbol{ s}}^{(g,n)}+{w}^{(g,n)},\\
&=\boldsymbol{h}^{(n)}e^{-\frac{\chc{j}2\pi n \Delta_\tau}{NT_s}} \operatorname{diag}\left(\bm \alpha\left(\Delta_\theta\right)^{\mathrm{H}}\right){\bm s}^{(g,n)}+{w}^{(g,n)}\\
&=\sum_{k=0}^{K}\gamma^{\star}_k e^{\frac{-j 2\pi n\bar{\tau}_k}{N T_{s}}}\boldsymbol{ \alpha}\left(\bar{\theta}_{\mathrm{Tx},k}\right)^{\mathrm{H}}{\bm s}^{(g,n)}+{w}^{(g,n)}\\
&=\bar{\boldsymbol{h}}^{(n)}{\bm s}^{(g,n)}+{w}^{(g,n)},
\end{aligned}\label{daisrsignal} \vspace{-5pt}
\end{equation}
where $\bar{\boldsymbol{h}}^{(n)}\triangleq\sum_{k=0}^{K}\gamma^{\star}_k e^{\frac{-j 2\pi n\bar{\tau}_k}{N T_{s}}}\boldsymbol{ \alpha}\left(\bar{\theta}_{\mathrm{Tx},k}\right)^{\mathrm{H}}$ represents a \textit{virtual channel} for the $n$-th sub-carrier, constructed based on the original channel ${\boldsymbol{h}}^{(n)}$ with shifted TOAs and AODs.

It will be shown in Section \ref{sec:eveerror} that shifting the TOAs and AODs virtually moves Alice and the $k$-th scatterer to other positions $\bar{\bm p}^\star\triangleq[\bar{p}^\star_x,\bar{p}^\star_y]^\mathrm{T}\in\mathbb{R}^{2\times1}$ and $\bar{\bm v}^\star_k\triangleq[\bar{v}^\star_{k,x},\bar{v}^\star_{k,y}]^\mathrm{T}\in\mathbb{R}^{2\times1}$, respectively, with $k=1,2,\cdots,K$. Therefore, due to the proposed DAIS scheme, after the channel estimation with the knowledge of $\bar{y}^{(g,n)}$ and ${\bm s}^{(g,n)}$, Eve cannot accurately infer Alice's true location using the mismatched geometric model in Equation \eqref{eq:geometry}. In contrast, since we assume that Bob receive\rev{s} the shared information $\bm\Delta\triangleq[\Delta_\tau,\Delta_\theta]^\mathrm{T}\in\mathbb{R}^{2\times1}$ through a secure channel that is inaccessible by Eve\footnote{Error in the shared information can reduce Bob's estimation accuracy as well\rev{,} but the study of such error is beyond the scope of this paper.}, Bob can construct effective pilot signals $\bar{\bm s}^{(g,n)}\triangleq {\bm\Phi^{(n)}}{\bm s}^{(g,n)}\in\mathbb{C}^{N_t\times1}$. By leveraging the original signal model in Equation \eqref{eq:rsignal} with the knowledge of $\bar{\bm s}^{(g,n)}$ for localization, Bob is not obfuscated by the proposed DAIS scheme and he can maintain his localization accuracy.

\vspace{-7pt}
\section{Localization Accuracy with Delay-Angle Information Spoofing}\label{sec:analysis}\vspace{-3pt}
\subsection{Effective Fisher Information for Channel Estimation}\vspace{-2.5pt}
Define $\bar{\bm\xi}\triangleq\left[\bar{\bm\tau}^\mathrm{T},\bar{\bm\theta}_{\mathrm{Tx}}^{\mathrm{T}},\mathfrak{R}\{\bm\gamma^{\star}\},\mathfrak{I}\{\bm\gamma^{\star}\}\right]^{\mathrm{T}}\in\mathbb{R}^{4(K+1)\times 1}$ as a vector of the unknown channel parameters, where $\bar{\bm\tau}\triangleq\left[\bar{\tau}_0, \bar{\tau}_1\cdots, \bar{\tau}_K\right]^{\mathrm{T}}\in\mathbb{R}^{(K+1)\times 1}$, $\bar{\bm\theta}_{\mathrm{Tx}}\triangleq\left[\bar{\theta}_{\mathrm{Tx},0}, \bar{\theta}_{\mathrm{Tx},1}\cdots, \bar{\theta}_{\mathrm{Tx},K}\right]^{\mathrm{T}}\in\mathbb{R}^{(K+1)\times 1}$, and $\bm\gamma^{\star} \triangleq [\gamma^{\star}_0,\gamma^{\star}_1,\cdots,\gamma^{\star}_K]^{\mathrm{T}}\in\mathbb{R}^{(K+1)\times 1}$. The Fisher information matrix (FIM) for the estimation of $\bar{\bm \xi}$, denoted as $\bm J_{\bar{\bm \xi}}\in\mathbb{R}^{4(K+1)\times 4(K+1)}$, is given by \cite{Scharf} \vspace{-5pt}
\begin{equation}\label{eq:FIMce}
    \bm J_{\bar{\bm \xi}} = \frac{2}{\sigma^2}\sum_{n=0}^{N-1}\sum_{g=1}^{G}\mathfrak{R}\left\{\left(\frac{\partial\bar{ u}^{(g,n)}}{\partial \bar{\bm\xi}}\right)^{*}\frac{\partial \bar{ u}^{(g,n)}}{\partial \bar{\bm\xi}}\right\}, \vspace{-5pt}
\end{equation}
where $\bar{u}^{(g,n)} \triangleq \bar{\boldsymbol{h}}^{(n)}\bm s^{(g,n)}$.

Let $\bar{\bm\eta}\triangleq[\bar{\bm\tau}^\mathrm{T},\bar{\bm\theta}_{\mathrm{Tx}}^{\mathrm{T}}]^{\mathrm{T}}\in\mathbb{R}^{2(K+1)\times1}
$ represent the location-relevant channel parameters and we partition the FIM $\bm J_{\bar{\bm \xi}}$ as $\bm J_{\bar{\bm \xi}}= \begin{bmatrix}\bm J_{\bar{\bm \xi}}^{(1)} &\bm J_{\bar{\bm \xi}}^{(2)}\\\bm J_{\bar{\bm \xi}}^{(3)}&\bm J_{\bar{\bm \xi}}^{(4)}\end{bmatrix}$, with $\bm J_{\bar{\bm \xi}}^{(m)}\in\mathbb{R}^{2(K+1)\times2(K+1)}$, for $m=1,2,3,4$. To analyze the localization accuracy, the channel coefficients are considered as nuisance parameters and accordingly the effective FIM for the estimation of the location-relevant channel parameters $\bar{\bm\eta}$ can be derived as \cite{Tichavskyefim}\vspace{-5pt}
\begin{equation} \label{eq:efim}
    \bm J_{\bar{\bm \eta}} = \bm J_{\bar{\bm \xi}}^{(1)}-\bm J_{\bar{\bm \xi}}^{(2)}\left(\bm J_{\bar{\bm \xi}}^{(4)}\right)^{-1}\bm J_{\bar{\bm \xi}}^{(3)}\in\mathbb{R}^{2(K+1)\times 2(K+1)}. \vspace{-5pt}
\end{equation}
Using the proposed DAIS method for location-privacy enhancement, the localization accuracy for Bob and Eve will be studied in the following subsections, respectively.


\vspace{-6pt}
\subsection{Bob's Localization Error} \vspace{-2.5pt}
Since Bob has the access to the shared information $\bm\Delta$, similar to Equation \eqref{eq:efim}, the effective FIM for the estimation of $\bm\eta^\star\triangleq[(\bm\tau^\star)^{\mathrm{T}},(\bm\theta^\star_{\mathrm{Tx}})^{\mathrm{T}}]\in\mathbb{R}^{2(K+1)\times 1}$ with ${\bm\tau}^\star\triangleq\left[{\tau}^\star_0, {\tau}^\star_1\cdots, {\tau}^\star_K\right]^{\mathrm{T}}\in\mathbb{R}^{(K+1)\times 1}$ and ${\bm\theta}^\star_{\mathrm{Tx}}\triangleq\left[{\theta}^\star_{\mathrm{Tx},0}, {\theta}^\star_{\mathrm{Tx},1}\cdots, {\theta}^\star_{\mathrm{Tx},K}\right]^{\mathrm{T}}\in\mathbb{R}^{(K+1)\times 1}$ can be derived, which is denoted as $\bm J_{{\bm \eta}^\star}\in\mathbb{R}^{2(K+1)\times 2(K+1)}$. Then, the FIM for Bob's localization, denoted as $\bm J_{\bm \phi^{\star}}\in\mathbb{R}^{2(K+1)\times 2(K+1)}$, is given by \vspace{-5pt}
\begin{equation}\label{eq:FIMloc}
    \bm J_{\bm \phi^{\star}} = \bm\Pi_{\bm \phi^{\star}}^\mathrm{T}\bm J_{{\bm \eta}^\star}\bm\Pi_{\bm \phi^{\star}} , \vspace{-5pt}
\end{equation}
where $\bm\phi^{\star}\triangleq [(\boldsymbol{p}^{\star})^{\mathrm{T}},(\boldsymbol{v}^{\star}_1)^{\mathrm{T}},(\boldsymbol{v}^{\star}_2)^{\mathrm{T}},\cdots,(\boldsymbol{v}^{\star}_K)^{\mathrm{T}}]^{\mathrm{T}}\in\mathbb{R}^{2(K+1)\times1}$ is a vector of the true locations of Alice and scatterers while $\bm\Pi_{\bm \phi^{\star}}\triangleq\frac{\partial {\bm\eta}^\star}{\partial\bm\phi^{\star}}\in\mathbb{R}^{2(K+1)\times2(K+1)}$ can be derived according to the true geometric model in Equation \eqref{eq:mismatchgeometry}. Let $\hat{\bm\phi}_{\text{Bob}}$ be Bob's estimate of $\bm\phi^{\star}$ with an \textit{unbiased estimator}. The mean squared error (MSE) of such an estimator can be bounded as follows \cite{Scharf} \vspace{-3pt}
\begin{equation}\label{eq:BobCRB}
    \mathbb{E}\left\{\left(\hat{\bm\phi}_{\text{Bob}}-{\bm\phi}^{\star}\right)\left(\hat{\bm\phi}_{\text{Bob}}-{\bm\phi}^{\star}\right)^{\mathrm{T}}\right\}\succeq\bm\Xi_{\bm \phi^{\star}}\triangleq{\bm J_{\bm \phi^{\star}}^{-1}}, \vspace{-3pt}
\end{equation}
which is the well-known CRB.

\vspace{-6pt}
\subsection{Eve's Localization Error} \vspace{-2.5pt}\label{sec:eveerror}
To analyze Eve's localization accuracy, employing an \textit{efficient estimator} for channel estimation, we denote by $\hat{\bm\eta}_{\text{Eve}}$ Eve’s estimate of $\bar{\bm\eta}$, and assume the true distribution of $\hat{\bm\eta}_{\text{Eve}}$ as in \cite{ZhengmismatchRIS}, \textit{i.e.,} \vspace{-3.5pt}
\begin{equation}\label{eq:truemodel}
\hat{\bm\eta}_{\text{Eve}}=u(\bm\phi^\star)+\bm\epsilon,\vspace{-5pt}
\end{equation}
where $\bm\epsilon$ is a zero-mean Gaussian random vector with covariance matrix $\bm\Sigma_{\bar{\bm\eta}}\triangleq\bm J_{\bar{\bm\eta}}^{-1}$. Herein, we have $\bar{\bm\eta}=u(\bm\phi^\star)$ with $u(\cdot)$ being a function mapping the location \rev{information} $\bm\phi^\star$ to the location-relevant channel parameters $\bar{\bm\eta}$ according to the true geometric model defined in Equation \eqref{eq:mismatchgeometry}. However, since the shared information is unavailable to Eve, for a given vector of potential locations of Alice and scatterer $\bar{\bm\phi}\triangleq [(\bar{\boldsymbol{p}})^{\mathrm{T}},(\bar{\boldsymbol{v}}_1)^{\mathrm{T}},(\bar{\boldsymbol{v}}_2)^{\mathrm{T}},\cdots,(\bar{\boldsymbol{v}}_K)^{\mathrm{T}}]^{\mathrm{T}}\in\mathbb{R}^{2(K+1)\times1}$, \rev{Eve} will incorrectly believe that the estimates of channel parameters $\hat{\bm\eta}_{\text{Eve}}$ are modelled as \vspace{-5pt}
\begin{equation}\label{eq:mismatchmodel}
\hat{\bm\eta}_{\text{Eve}}=o(\bar{\bm\phi})+\bm\epsilon, \vspace{-5pt}
\end{equation}
where $o(\cdot)$ is function similar to $u(\cdot)$\rev{,} but is defined according to the mismatched geometric model assumed in Equation \eqref{eq:geometry}. The true and mismatched distributions of $\hat{\bm\eta}_{\text{Eve}}$ are denoted as $g_{\text{T}}(\hat{\bm\eta}_{\text{Eve}}|\bm\phi^\star)$ and $g_{\text{M}}(\hat{\bm\eta}_{\text{Eve}}|\bar{\bm\phi})$, respectively. Then we can find a vector of the pseudo-true locations of Alice and scatterers $\bar{\bm\phi}^\star\triangleq [(\bar{\boldsymbol{p}}^\star)^{\mathrm{T}},(\bar{\boldsymbol{v}}^\star_1)^{\mathrm{T}},(\bar{\boldsymbol{v}}^\star_2)^{\mathrm{T}},\cdots,(\bar{\boldsymbol{v}}^\star_K)^{\mathrm{T}}]^{\mathrm{T}}\in\mathbb{R}^{2(K+1)\times1}$, when Eve exploits Equation \eqref{eq:mismatchmodel} for localization though the true distribution of $\hat{\bm\eta}_{\text{Eve}}$ is defined according to Equation \eqref{eq:truemodel}, such that \cite{Fortunatimismatchsurvey} \vspace{-5pt}
\begin{equation}\label{eq:KLD}
    \bar{\bm\phi}^\star = \arg\min_{\bar{\bm\phi}} D\left(g_{\text{T}}(\hat{\bm\eta}_{\text{Eve}}|\bm\phi^\star)\|g_{\text{M}}(\hat{\bm\eta}_{\text{Eve}}|\bar{\bm\phi})\right),\vspace{-5pt}
\end{equation}
where $D(\cdot\|\cdot)$ represents the Kullback–Leibler (KL) divergence for two given distributions. The closed-form expression of $\bar{\bm\phi}^\star$ will be derived later in this subsection. 

Denote by $\hat{\bm \phi}_\text{Eve}$ a \textit{misspecified-unbiased} estimator designed according to the mismatched model in Equation \eqref{eq:mismatchmodel}. With the DAIS for location-privacy enhancement, there is a lower bound for the MSE of Eve’s localization based on the analysis of the MCRB \cite{Fortunatimismatchsurvey} \vspace{-5pt}
\begin{equation}\label{eq:EveLB}
\begin{aligned}
    &\mathbb{E}\left\{\left(\hat{\bm \phi}_\text{Eve}-{\bm \phi}^{\star}\right)\left(\hat{\bm \phi}_\text{Eve}-{\bm \phi}^{\star}\right)^{\mathrm{T}}\right\}\\
    &\quad\succeq \bm\Psi_{\bar{\bm \phi}^\star}\triangleq{\underbrace{\bm A_{\bar{\bm \phi}^\star}^{-1}\bm B_{\bar{\bm \phi}^\star}\bm A_{\bar{\bm \phi}^\star}^{-1}}_{\bm\Psi_{\bar{\bm \phi}^\star}^{(\romannum{1})}}+\underbrace{(\bar{\bm\phi}^\star-\bm\phi^{\star})(\bar{\bm\phi}^\star-\bm\phi^{\star})^\mathrm{T}}_{\bm\Psi_{\bar{\bm \phi}^\star}^{(\romannum{2})}}}, \vspace{-5pt}
\end{aligned}
\end{equation}
where $\bm A_{\bar{\bm \phi}^\star}\in\mathbb{R}^{2(K+1)\times2(K+1)}$ and $\bm B_{\bar{\bm \phi}^\star}\in\mathbb{R}^{2(K+1)\times2(K+1)}$ are two generalized FIMs, defined as \vspace{-5pt}
\begin{equation}\label{eq:A}
    \bm A_{\bar{\bm \phi}^\star}[r,l]\triangleq\mathbb{E}_{g_{\text{T}}(\hat{\bm\eta}_{\text{Eve}}|\bm\phi^\star)}\left\{\frac{\partial^2}{\partial\bar{\bm\phi}^\star[r]\partial\bar{\bm\phi^\star}[l]}\log g_{\text{M}}(\hat{\bm\eta}_{\text{Eve}}|\bar{\bm\phi}^\star)\right\},
\end{equation}
and
\begin{equation}\label{eq:B}
\begin{aligned}
&\bm B_{\bar{\bm \phi}^\star}[r,l]\\
&\triangleq\mathbb{E}_{g_{\text{T}}(\hat{\bm\eta}_{\text{Eve}}|\bm\phi^\star)}\left\{\frac{\partial\log g_{\text{M}}(\hat{\bm\eta}_{\text{Eve}}|\bar{\bm\phi}^\star) }{\partial\bar{\bm\phi}^\star[r]}\frac{\partial\log g_{\text{M}}(\hat{\bm\eta}_{\text{Eve}}|\bar{\bm\phi}^\star) }{\partial\bar{\bm\phi}^\star[l]}\right\},
\end{aligned}
\end{equation}
for $r,l = 1,2,\cdots,2(K+1)$. 

The next step is to derive the closed-form expression of $\bar{\bm\phi}^{\star}$. Since $\bar{\bm\phi}^{\star}$ is the parameter vector that minimizes \rev{the} KL divergence for two given Gaussian distributions $g_{\text{T}}(\hat{\bm\eta}_{\text{Eve}}|\bm\phi^\star)$ and $g_{\text{M}}(\hat{\bm\eta}_{\text{Eve}}|\bar{\bm\phi})$ according to Equation \eqref{eq:KLD}, if there exists a unique vector $\bar{\bm\phi}^\prime\in\mathbb{R}^{2(K+1)\times1}$ such that $o(\bar{\bm\phi}^\prime) = u(\bm\phi^\star)$, we have $\bar{\bm\phi}^\star=\bar{\bm\phi}^\prime$ according to the non-negative property of KL divergence\footnote{An alternative proof of this statement can be found in \cite[Equations (13) and (14)]{ZhengmismatchRIS}, in terms of the KL divergence for two Gaussian distributions.}. Then, \rev{our} goal amounts to deriving such a $\bar{\bm\phi}^\prime$. Since the path with the smallest TOA is typically treated as the LOS path for localization \cite{Ayyalasomayajula,FascistaMISOML}, considering the effect of the potential \textit{phase wrapping} caused by the proposed DAIS scheme, we discuss the following two cases.

\begin{enumerate}
    \item[C1)] $\bar{\tau}_0=\min\{\bar{\tau}_0, \bar{\tau}_1,\cdots \bar{\tau}_K\}$ holds. In \rev{the} case, the LOS path can be correctly distinguished if error in the estimated TOAs is small enough. By solving $o(\bar{\bm\phi}) = u(\bm\phi^{\star})$ for $\bar{\bm\phi}^{\star}$, we have \rev{the} unique solution as \vspace{-5pt}

        \begin{equation}\label{eq:ptruecase1}
            \begin{aligned}
                \bar{\boldsymbol{p}}^{\star} &= \boldsymbol{z} - c\chc{\bar\tau}_0[\cos(\chc{\bar\theta}_{\mathrm{Tx},0}),\sin(\chc{\bar\theta}_{\mathrm{Tx},0})]^{\mathrm{T}}\\
                \bar{v}^\star_{k,x} &=\frac{1}{2}\bar{b}^{\star}_{k}\cos(\chc{\bar{\theta}}_{\mathrm{Tx},k})+\bar{p}^{\star}_{x}\\
                \bar{v}^\star_{k,y} &= \frac{1}{2}\bar{b}^{\star}_{k}\sin(\chc{\bar{\theta}}_{\mathrm{Tx},k})+\bar{p}^{\star}_{y},
            \end{aligned}\vspace{-5pt}
        \end{equation}
        where \vspace{-3pt}
        \begin{equation}
           \begin{aligned}    
           &\bar{b}^{\star}_{k}\\
           &=\frac{\left(c\chc{\bar\tau}_k\right)^2-\left(z_x-\bar{p}^{\star}_{x}\right)^2-\left(z_y-\bar{p}^{\star}_{y}\right)^2}{c\chc{\bar\tau}_k-\left(z_x-\bar{p}^{\star}_{x}\right)\cos(\chc{\bar{\theta}}_{\mathrm{Tx},k})-\left(z_y-\bar{p}^{\star}_{y}\right)\sin(\chc{\bar{\theta}}_{\mathrm{Tx},k})}, 
           \end{aligned}\vspace{-2pt}
        \end{equation}
        with $k=1,2,\cdots,K$.
 
    \item[C2)] $\bar{\tau}_0\neq\min\{\bar{\tau}_0, \bar{\tau}_1,\cdots \bar{\tau}_K\}$ holds. In \rev{the} case, the LOS path cannot be correctly distinguished even with the true TOAs\footnote{From the aspect of obfuscating the LOS path, the proposed DAIS scheme can be considered as an extension of \cite{Ayyalasomayajula}; the design in \cite{Ayyalasomayajula} relies on CSI that is not needed in our scheme.}. Without loss of generality, we assume that $\bar{\tau}_1=\min\{\bar{\tau}_0, \bar{\tau}_1,\cdots \bar{\tau}_K\}$. By solving $o(\bar{\bm\phi}) = u(\bm\phi^{\star})$ for $\bar{\bm\phi}^{\star}$, the pseudo-true locations of Alice and the first scatterer are given by \vspace{-5pt}
        \begin{equation}\label{eq:ptruecase2}
            \begin{aligned}
                \bar{\boldsymbol{p}}^{\star} &= \boldsymbol{z} - c\chc{\bar\tau}_1[\cos(\chc{\bar\theta}_{\mathrm{Tx},1}),\sin(\chc{\bar\theta}_{\mathrm{Tx},1})]^{\mathrm{T}}\\
                \bar{v}^\star_{1,x} &=\frac{1}{2}\bar{b}^{\star}_{0}\cos(\chc{\bar{\theta}}_{\mathrm{Tx},0})+\bar{p}^{\star}_{x}\\
                \bar{v}^\star_{1,y} &= \frac{1}{2}\bar{b}^{\star}_{0}\sin(\chc{\bar{\theta}}_{\mathrm{Tx},0})+\bar{p}^{\star}_{y}. 
            \end{aligned} \vspace{-5pt}
        \end{equation}
    The pseudo-true locations of the other scatterers are the same \rev{as those} derived in Case C1. 
\end{enumerate}
{We note that the lower bound on Eve's estimation error in \cite[Equation (22)]{li2023fpi} is derived from the analysis of the CRB in the presence of the FPI, since Eve will believe that there are more paths in her channel given the \textit{geometric feasibility} of the injected fake paths \cite{li2023fpi}.  {In contrast, herein we introduce {geometric mismatch}, thus the results in \cite{li2023fpi} cannot characterize the performance degradation caused by DAIS as revealed in Equation \eqref{eq:EveLB}.}} 
\vspace{-5pt}
\subsection{Degraded Localization Accuracy}\label{sec:degradationdais}\vspace{-1.5pt}
By comparing the \rev{lower bounds} of Bob's and Eve's localization error, derived in Equations \eqref{eq:BobCRB} and \eqref{eq:EveLB}, respectively, we can show that the proposed DAIS strategy can \rev{effectively} decrease Eve's localization accuracy as follows.

\begin{proposition}\label{prop:degradation} Supposed that Equations \eqref{eq:truemodel} and \eqref{eq:mismatchmodel} characterize the true and mismatched distributions of the estimated channel parameters $\hat{\bm\eta}_{\text{Eve}}$, there exists a constant $\sigma_0$ such that $\operatorname{Tr}(\bm\Psi_{\bar{\bm \phi}^\star})\geq\operatorname{Tr}(\bm\Xi_{\bm \phi^{\star}})$  when $0\leq\sigma\leq\sigma_0$, \rev{where $\sigma$ represents the standard deviation of the Gaussian noise ${w}^{(g,n)}$ while $\bm\Xi_{\bm \phi^{\star}}$ and $\bm\Psi_{\bar{\bm \phi}^\star}$ are defined in Equations \eqref{eq:BobCRB} and $\eqref{eq:EveLB}$, respectively.}
\end{proposition}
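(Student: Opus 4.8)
The plan is to exploit the additive structure of the MCRB in Equation \eqref{eq:EveLB} and to compare the three scalar quantities $\operatorname{Tr}(\bm\Xi_{\bm \phi^{\star}})$, $\operatorname{Tr}(\bm\Psi_{\bar{\bm \phi}^\star}^{(\romannum{1})})$, and $\operatorname{Tr}(\bm\Psi_{\bar{\bm \phi}^\star}^{(\romannum{2})})$ as functions of the noise level $\sigma$. The central observation is that the two \emph{variance} contributions, namely Bob's CRB and the first (generalized inverse-FIM) term of Eve's MCRB, both vanish at rate $\sigma^2$ as $\sigma\to 0$, whereas the \emph{bias} contribution $\operatorname{Tr}(\bm\Psi_{\bar{\bm \phi}^\star}^{(\romannum{2})})=\|\bar{\bm\phi}^\star-\bm\phi^{\star}\|_2^2$ is a strictly positive constant independent of $\sigma$. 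Once this separation of rates is established, the existence of $\sigma_0$ follows from an elementary threshold argument.

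First I would pin down the $\sigma^2$-scaling of $\operatorname{Tr}(\bm\Xi_{\bm \phi^{\star}})$. From Equation \eqref{eq:FIMce} the channel-parameter FIM factors as $\bm J_{\bar{\bm\xi}}=\sigma^{-2}\tilde{\bm J}_{\bar{\bm\xi}}$, where $\tilde{\bm J}_{\bar{\bm\xi}}$ collects the derivatives of $\bar u^{(g,n)}$ and is independent of $\sigma$. Because every block of the partition carries the common factor $\sigma^{-2}$, the Schur complement in Equation \eqref{eq:efim} preserves it, so the effective FIM obeys $\bm J_{\bm\eta^\star}=\sigma^{-2}\tilde{\bm J}_{\bm\eta^\star}$; the Jacobian $\bm\Pi_{\bm \phi^{\star}}$ in Equation \eqref{eq:FIMloc} is purely geometric and $\sigma$-independent. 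Hence $\bm\Xi_{\bm \phi^{\star}}=\sigma^2(\bm\Pi_{\bm \phi^{\star}}^{\mathrm{T}}\tilde{\bm J}_{\bm\eta^\star}\bm\Pi_{\bm \phi^{\star}})^{-1}$ and $\operatorname{Tr}(\bm\Xi_{\bm \phi^{\star}})=\sigma^2 c_{\mathrm{B}}$ with $c_{\mathrm{B}}>0$ constant.

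Next I would establish the matching scaling for $\operatorname{Tr}(\bm\Psi_{\bar{\bm \phi}^\star}^{(\romannum{1})})$; this is the step requiring the most care, and the main obstacle. Since the C1/C2 discussion produces a pseudo-true $\bar{\bm\phi}^\star$ with $o(\bar{\bm\phi}^\star)=u(\bm\phi^\star)$, the residual $\hat{\bm\eta}_{\text{Eve}}-o(\bar{\bm\phi}^\star)=\bm\epsilon$ has zero mean under $g_{\text{T}}$, and the models in Equations \eqref{eq:truemodel} and \eqref{eq:mismatchmodel} share the common covariance $\bm\Sigma_{\bar{\bm\eta}}$. Differentiating $\log g_{\text{M}}$ and taking the expectation in Equations \eqref{eq:A} and \eqref{eq:B}, the curvature term proportional to $\mathbb{E}\{\bm\epsilon\}$ drops out, leaving $\bm A_{\bar{\bm \phi}^\star}=-\bm\Pi_o^{\mathrm{T}}\bm\Sigma_{\bar{\bm\eta}}^{-1}\bm\Pi_o$ and $\bm B_{\bar{\bm \phi}^\star}=\bm\Pi_o^{\mathrm{T}}\bm\Sigma_{\bar{\bm\eta}}^{-1}\bm\Pi_o=-\bm A_{\bar{\bm \phi}^\star}$, where $\bm\Pi_o\triangleq\partial o/\partial\bar{\bm\phi}|_{\bar{\bm\phi}^\star}$. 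Therefore $\bm\Psi_{\bar{\bm \phi}^\star}^{(\romannum{1})}=\bm A_{\bar{\bm \phi}^\star}^{-1}\bm B_{\bar{\bm \phi}^\star}\bm A_{\bar{\bm \phi}^\star}^{-1}=-\bm A_{\bar{\bm \phi}^\star}^{-1}=(\bm\Pi_o^{\mathrm{T}}\bm\Sigma_{\bar{\bm\eta}}^{-1}\bm\Pi_o)^{-1}$. Because $\bm\Sigma_{\bar{\bm\eta}}=\bm J_{\bar{\bm\eta}}^{-1}=\sigma^2\bm\Sigma_0$ with $\bm\Sigma_0$ $\sigma$-independent (again by the $\sigma^{-2}$-factorization of $\bm J_{\bar{\bm\eta}}$), this yields $\bm\Psi_{\bar{\bm \phi}^\star}^{(\romannum{1})}=\sigma^2(\bm\Pi_o^{\mathrm{T}}\bm\Sigma_0^{-1}\bm\Pi_o)^{-1}$ and $\operatorname{Tr}(\bm\Psi_{\bar{\bm \phi}^\star}^{(\romannum{1})})=\sigma^2 c_{\mathrm{E}}$ with $c_{\mathrm{E}}>0$ constant.

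Finally I would assemble the comparison. Writing $d^2\triangleq\|\bar{\bm\phi}^\star-\bm\phi^{\star}\|_2^2$, the closed forms in Equations \eqref{eq:ptruecase1} and \eqref{eq:ptruecase2} depend on $\Delta_\tau,\Delta_\theta$ but not on $\sigma$, and under the DAIS design they place Alice at a location distinct from the true one, so $d^2>0$ is a constant. Combining the three rates gives
\[\operatorname{Tr}(\bm\Psi_{\bar{\bm \phi}^\star})-\operatorname{Tr}(\bm\Xi_{\bm \phi^{\star}})=d^2+\sigma^2(c_{\mathrm{E}}-c_{\mathrm{B}}).\]
If $c_{\mathrm{E}}\geq c_{\mathrm{B}}$ the right-hand side is positive for every $\sigma\geq 0$; otherwise it is nonnegative precisely for $0\leq\sigma\leq\sigma_0$ with $\sigma_0\triangleq d/\sqrt{c_{\mathrm{B}}-c_{\mathrm{E}}}$. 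Either way a valid $\sigma_0$ exists, proving the claim. I expect the residual-mean cancellation that yields $\bm B_{\bar{\bm \phi}^\star}=-\bm A_{\bar{\bm \phi}^\star}$, and hence the clean $\sigma^2$ scaling of the variance term, to be the only genuinely delicate point; the remaining steps are factorizations and an elementary comparison.
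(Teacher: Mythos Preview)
Your argument is correct and shares the paper's core idea: split $\operatorname{Tr}(\bm\Psi_{\bar{\bm \phi}^\star})$ into the variance part $\operatorname{Tr}(\bm\Psi_{\bar{\bm \phi}^\star}^{(\romannum{1})})$ and the $\sigma$-independent bias part $\operatorname{Tr}(\bm\Psi_{\bar{\bm \phi}^\star}^{(\romannum{2})})$, then observe that $\operatorname{Tr}(\bm\Xi_{\bm \phi^{\star}})\to 0$ as $\sigma\to 0$. The difference lies in how much you compute about $\bm\Psi_{\bar{\bm \phi}^\star}^{(\romannum{1})}$. The paper simply notes that $\bm\Psi_{\bar{\bm \phi}^\star}^{(\romannum{1})}$ is positive semidefinite, so $\operatorname{Tr}(\bm\Psi_{\bar{\bm \phi}^\star})\geq\operatorname{Tr}(\bm\Psi_{\bar{\bm \phi}^\star}^{(\romannum{2})})$, and then invokes $\lim_{\sigma\downarrow 0}\operatorname{Tr}(\bm\Xi_{\bm \phi^{\star}})=0$ together with the $\sigma$-independence of the bias; no explicit evaluation of $\bm A_{\bar{\bm \phi}^\star}$ or $\bm B_{\bar{\bm \phi}^\star}$ is needed. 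You instead work out the Gaussian score and Hessian at the pseudo-true point, obtain $\bm B_{\bar{\bm \phi}^\star}=-\bm A_{\bar{\bm \phi}^\star}$ (valid because $o(\bar{\bm\phi}^\star)=u(\bm\phi^\star)$ annihilates the residual mean), and deduce the exact $\sigma^2$ scaling $\operatorname{Tr}(\bm\Psi_{\bar{\bm \phi}^\star}^{(\romannum{1})})=\sigma^2 c_{\mathrm{E}}$. That extra effort is not required for the statement as posed, but it buys you an explicit threshold $\sigma_0=d/\sqrt{c_{\mathrm{B}}-c_{\mathrm{E}}}$ (when $c_{\mathrm{B}}>c_{\mathrm{E}}$), which the paper's argument does not supply. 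In short: same strategy, your route is more quantitative, the paper's is more economical.
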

\begin{proof}
    For the given true and mismatched distributions $g_{\text{T}}(\hat{\bm\eta}_{\text{Eve}}|\bm\phi^\star)$ and $g_{\text{M}}(\hat{\bm\eta}_{\text{Eve}}|\bar{\bm\phi})$ characterized by  Equations \eqref{eq:truemodel} and \eqref{eq:mismatchmodel}, it can be verified that $\bm\Psi_{\bar{\bm \phi}^\star}^{(\romannum{1})}$ and $\bm\Psi_{\bar{\bm \phi}^\star}^{(\romannum{2})}$ are positive semidefinite matrices 
    according to Equations \eqref{eq:EveLB}, \eqref{eq:A} and \eqref{eq:B}, while $\operatorname{Tr}\left(\bm\Psi^{(\romannum{2})}_{\bar{\bm \phi}^\star}\right)$ is not relevant {to} $\sigma$ \rev{from} Equations \eqref{eq:ptruecase1} and \eqref{eq:ptruecase2}. Hence, $\operatorname{Tr}\left(\bm\Psi^{(\romannum{1})}_{\bar{\bm \phi}^\star}\right)\geq0$ and $\operatorname{Tr}\left(\bm\Psi^{(\romannum{2})}_{\bar{\bm \phi}^\star}\right)\geq0$ hold and the goal amounts to proving that there is a constant $\sigma_0$ such that $\operatorname{Tr}\left(\bm\Psi^{(\romannum{2})}_{\bar{\bm \phi}^\star}\right)>\operatorname{Tr}(\bm\Xi_{\bm \phi^{\star}})$ for any $0\leq\sigma\leq\sigma_0$. Then, from Equations \eqref{eq:FIMce}, \eqref{eq:efim}, and \eqref{eq:FIMloc}, we have $\lim_{\sigma\downarrow0}\operatorname{Tr}(\bm\Xi_{\bm \phi^{\star}})=0$, yielding the desired statement.
\end{proof}

According to Equations \eqref{eq:BobCRB} and \eqref{eq:EveLB}, Proposition \ref{prop:degradation} shows that Eve cannot estimate Alice's location more accurately than Bob if the value of $\sigma$ is small enough. {Considering that $\lim_{\sigma\downarrow0}\operatorname{Tr}(\bm\Psi_{\bar{\bm \phi}^\star}^{(\romannum{1})})=0$ also holds}, the {geometric mismatch} introduced by the proposed DAIS scheme, {which corresponds to the quantity $\operatorname{Tr}(\bm\Psi_{\bar{\bm \phi}^\star}^{(\romannum{2})})$,  is dominant} in the degradation of Eve's localization accuracy at a high SNR. We will numerically show the impact
of the choices of the design parameter $\bm\Delta$ on Eve’s localization accuracy in Section \ref{sec:numericalresults}.
\vspace{-6pt}
\subsection{Precoder Structure Leakage}\label{sec:strucleakage}\vspace{-2.5pt}
If the {structure of the precoder} designed in Equation \eqref{eq:daisbeamformer} is leaked, Eve {will}  endeavor to estimate $\bm\Delta$ to avoid the geometric mismatch, yet \rev{this} is a statistically hard estimation problem if $\Delta$ is an unknown deterministic vector according to the following proposition. 

\begin{proposition}\label{prop:strucleakge} Assume that $\Delta$ is an unknown deterministic vector. Let ${\bm\chi}\triangleq\left[({\bm\tau}^\star)^\mathrm{T},({\bm\theta}_{\mathrm{Tx}}^\star)^{\mathrm{T}},\mathfrak{R}\{(\bm\gamma^{\star})^\mathrm{T}\},\mathfrak{I}\{(\bm\gamma^{\star})^\mathrm{T}\},\Delta^\mathrm{T}\right]^{\mathrm{T}}\in\mathbb{R}^{(4K+6)\times 1}$ and $\bm J_{{\bm \chi}}\in\mathbb{R}^{(4K+6)\times (4K+6)}$ be a vector of the unknown channel parameters and the associated FIM, respectively. $\bm J_{{\bm \chi}}$ is a singular matrix.
\end{proposition}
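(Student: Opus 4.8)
The plan is to exhibit an explicit nonzero vector in the null space of $\bm J_{\bm\chi}$, which immediately shows it cannot be invertible. The conceptual starting point is that the received signal $\bar y^{(g,n)}$ in Equation \eqref{daisrsignal}, and hence the entire likelihood, depends on the augmented parameter vector $\bm\chi$ only through the \emph{composite} quantities $\bar\tau_k$ and $\bar\theta_{\mathrm{Tx},k}$ of Equation \eqref{eq:mismatchgeometry}, together with $\bm\gamma^\star$. Because $\bar\tau_k=(\tau^\star_k+\Delta_\tau)_{(0,NT_s]}$ couples $\tau^\star_k$ and $\Delta_\tau$ only through their sum, the model is invariant under the reparametrization $\tau^\star_k\mapsto\tau^\star_k+\epsilon$ (simultaneously for all $k=0,1,\dots,K$) together with $\Delta_\tau\mapsto\Delta_\tau-\epsilon$. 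Such an exact invariance of the distribution forces a rank-deficient FIM, and the generator of this one-parameter family is precisely the null vector I will construct.

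Concretely, I would first write $\bar u^{(g,n)}=\bar{\bm h}^{(n)}\bm s^{(g,n)}$ and apply the chain rule. Evaluating at a generic operating point (away from the measure-zero set where the wrapping operator $(\cdot)_{(0,NT_s]}$ is non-differentiable), one has $\partial\bar\tau_k/\partial\tau^\star_k=1$ and $\partial\bar\tau_k/\partial\Delta_\tau=1$ for every $k$, so that $\partial\bar u^{(g,n)}/\partial\Delta_\tau=\sum_{k=0}^{K}\partial\bar u^{(g,n)}/\partial\tau^\star_k$ for all $g,n$. Stacking the row Jacobians $\partial\bar u^{(g,n)}/\partial\bm\chi$ into a matrix $\bm D\in\mathbb{C}^{NG\times(4K+6)}$, this identity states that the column of $\bm D$ associated with $\Delta_\tau$ equals the sum of the $K+1$ columns associated with $\tau^\star_0,\dots,\tau^\star_K$. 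Hence $\bm D\bm a=\bm{0}$ for the real vector $\bm a$ carrying entry $+1$ in each of the $K+1$ TOA slots, entry $-1$ in the $\Delta_\tau$ slot, and $0$ elsewhere.

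I would then close the argument via the Gram structure of the FIM. By the same construction used in Equation \eqref{eq:FIMce}, the information matrix for $\bm\chi$ admits the form $\bm J_{\bm\chi}=\frac{2}{\sigma^2}\mathfrak{R}\{\bm D^{\mathrm{H}}\bm D\}$. Since $\bm a$ is real, $\bm J_{\bm\chi}\bm a=\frac{2}{\sigma^2}\mathfrak{R}\{\bm D^{\mathrm{H}}(\bm D\bm a)\}=\bm{0}$, so $\bm a\neq\bm{0}$ lies in the kernel and $\bm J_{\bm\chi}$ is singular. As a remark I would note that an entirely analogous dependence holds on the angular side: since $\bar\theta_{\mathrm{Tx},k}=\arcsin(\sin\theta^\star_{\mathrm{Tx},k}+\sin\Delta_\theta)$, the chain rule gives $\partial\bar u^{(g,n)}/\partial\Delta_\theta=\sum_{k}[\cos\Delta_\theta/\cos\theta^\star_{\mathrm{Tx},k}]\,\partial\bar u^{(g,n)}/\partial\theta^\star_{\mathrm{Tx},k}$, yielding a second independent null direction; either one alone already establishes the claim.

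The main obstacle is technical rather than conceptual: one must justify differentiating at a point where the modulo/wrapping maps are smooth, and confirm that the chain-rule coefficients remain finite, in particular $\cos\theta^\star_{\mathrm{Tx},k}\neq 0$ and $|\sin\theta^\star_{\mathrm{Tx},k}+\sin\Delta_\theta|<1$, both of which hold generically away from the wrapping boundaries. Once differentiability is secured, the linear-dependence relation is exact and the singularity follows with no eigenvalue or determinant computation—the rank deficiency is structural, reflecting the fundamental non-identifiability of $\Delta_\tau$ (and $\Delta_\theta$) from the true delays (and angles).
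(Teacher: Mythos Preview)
Your argument is correct and mirrors the paper's own proof: both establish singularity by showing the noise-free observations satisfy $\partial u^{(g,n)}/\partial\Delta_\tau=\sum_k\partial u^{(g,n)}/\partial\tau^\star_k$ (and an analogous angular relation), whence the Jacobian columns---and therefore the Gram-form FIM---are rank deficient. Your explicit null-vector formulation and the chain-rule coefficient $\cos\Delta_\theta/\cos\theta^\star_{\mathrm{Tx},k}$ on the angular side are in fact a shade more careful than the paper's terse statement, but the underlying idea is identical.
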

\begin{proof}
    Considering the knowledge of the {structure of the precoder}, we denote by ${u}^{(g,n)} \triangleq\boldsymbol{h}^{(n)}{{\bm \Phi}^{(n)}}{\boldsymbol{s}}^{(g,n)}$ the noise-free observation for the estimation of $\bm\chi$, with $g=1,2,\cdots,G$ and $n=0,1,\cdots,N-1$. It can be verified that $\frac{\partial  u^{(g,n)}}{\partial \Delta_\tau}= \sum_{k=0}^{K} \frac{\partial u^{(g,n)}}{\partial {\tau}^\star_k}$, and $\frac{\partial  u^{(g,n)}}{\partial \Delta_{\theta}}= \sum_{k=0}^{K} \frac{\partial u^{(g,n)}}{\partial {\theta}^\star_{\mathrm{Tx},k}}$ hold
    so there are two rows of $\bm J_{{\bm \chi}}$ \chc{that} are linearly dependent on the others, which concludes the proof. 
\end{proof}

In contrast to \cite{li2023fpi}, \rev{herein,} when Eve knows {the structure of the designed precoder}, she still cannot distinguish the shifts introduced in the DAIS scheme from the true delay and angle information, as indicated by Proposition \ref{prop:strucleakge}, which suggests the robustness of our scheme.

\vspace{-5pt}
\section{Numerical Results}\label{sec:numericalresults}\vspace{-3pt}
In this section, to show that the proposed DAIS scheme effectively protects Alice's location from being accurately estimated by Eve, we numerically evaluate the lower bound of Eve's localization error derived in Section \ref{sec:eveerror}, which is the best performance that Eve can achieve with a misspecified-unbiased estimator. In addition, the derived CRB for Bob's localization error is also provided.

In all of the numerical results, the parameters $K$, \chc{$\varphi_c$}, $B$, $c$,  $N_t$, $N$, $G$ are set to $2$, $60$ GHz, $30$ MHz, $300$ m/us, $16$, $16$, and $16$, respectively. For the adopted channel model in \eqref{eq:channel_subcarrier}, channel coefficients are numerically generated according to the free-space path loss model \cite{Goldsmith} while the scatterers of the two NLOS paths are at $[8.87\text{ m}, -6.05 \text{ m}]^{\mathrm{T}}$ and $[7.44 \text{ m}, 8.53 \text{ m}]^{\mathrm{T}}$, respectively. Alice is located at $[3 \text{ m},0 \text{ m}]^{\mathrm{T}}$, transmitting certain random, complex values uniformly generated on the unit circle as the pilot signals.  For a fair comparison, we place Bob and Eve at the same location $[10 \text{ m},5 \text{ m}]^{\mathrm{T}}$ so that the same received signals are used for their individual localization. The evaluated root-mean-square errors (RMSE) for Bob's and Eve’s localization are defined as \vspace{-10pt}
\begin{equation}
    \begin{aligned}
        \operatorname{RMSE}_{\text{Bob}}&\triangleq\sqrt{\bm\Xi_{\bm \phi^{\star}}[1,1]+\bm\Xi_{\bm \phi^{\star}}[2,2]},\\
        \operatorname{RMSE}_{\text{Eve}}&\triangleq\sqrt{\bm\Psi_{\bar{\bm \phi}^\star}[1,1]+\bm\Psi_{\bar{\bm \phi}^\star}[2,2]}, 
    \end{aligned}\vspace{-5pt}
\end{equation}
respectively. Unless otherwise stated, the SNR is defined as $\operatorname{SNR}\triangleq10\log_{10}\frac{\sum^{G}_{g=1}\sum^{N-1}_{n=0}|{\bar{u}}^{(g,n)}|^{2}}{NG\sigma^2}$.

\begin{figure}[t]
\centering
\includegraphics[scale=0.48]{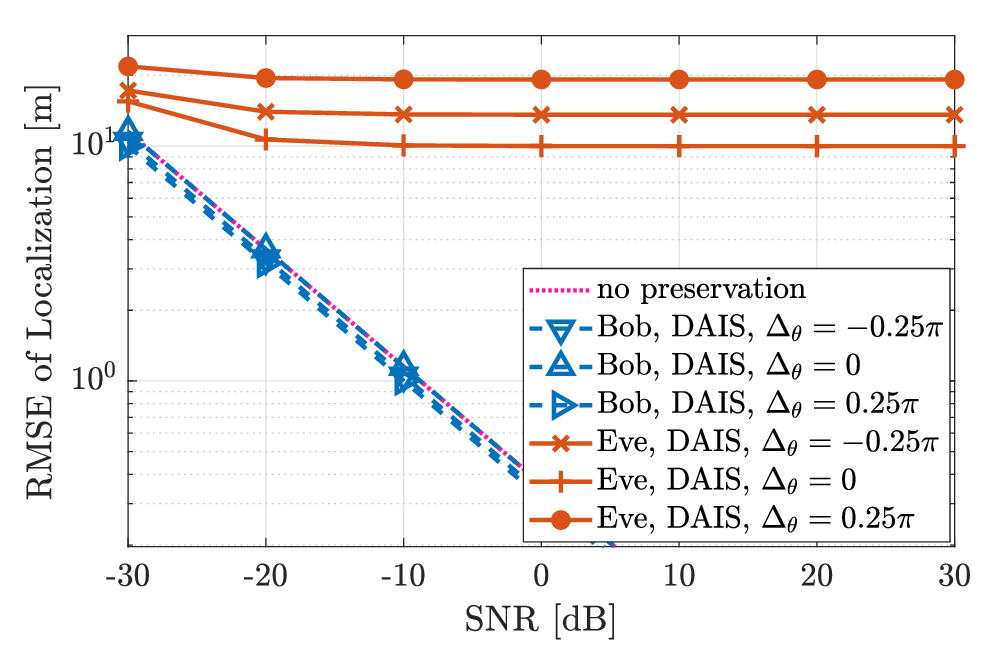}\vspace{-11pt}
\caption{{Lower bounds for the RMSE of localization with different choices of $\Delta_\theta$, where $\Delta_\tau=T_s$.}}
\label{fig:rmse_angle} \vspace{-15pt}
\end{figure}
The RMSEs for Bob's and Eve's localization accuracy are shown in Figure \ref{fig:rmse_angle}, where $\Delta_\tau$ is fixed at $T_s$ while $\Delta_\theta$ is set to $-0.25\pi$, $0$, and $0.25\pi$, respectively, for the proposed DAIS scheme. Since Bob can receive the shared information through a secure channel and construct the effective pilot signal $\bar{\bm s}^{(g,n)}$ for his localization, the obfuscation caused by the proposed DAIS scheme can be removed, leading to negligible loss according to Figure \ref{fig:rmse_angle}. In contrast, due to lack of the knowledge of $\Delta_\tau$ and $\Delta_\theta$, as shown in Figure \ref{fig:rmse_angle}, there is a strong degradation of Eve's localization accuracy.  To be more specific, when $\operatorname{SNR}$ is $0$ dB and $\Delta_\theta$ is set to $0.25\pi$, $\operatorname{RMSE}_{\text{Eve}}$ is up to around $19.22$ m because of the introduced geometric mismatch, while $\operatorname{RMSE}_{\text{Bob}}$ can be maintained at around $0.32$ m. Coinciding with the analysis provided in Section \ref{sec:degradationdais}, at high SNRs, Eve’s localization accuracy is mainly affected by the distance between Alice's true location and the corresponding pseudo-true location\footnote{In terms of the reduction of Eve's localization accuracy, the optimal choice of $\Delta_\theta$ and $\Delta_\tau$ is unknown without CSI but it can be proved that for a given $\Delta_\tau$, such a distance with $\sin(\Delta_\theta)\neq 0$ is greater than that with $\sin(\Delta_\theta)=0$.}; such distances are around $13.61$ m, $10.00$ m, and $19.22$ m for $\Delta_\theta=-0.25\pi$, $\Delta_\theta=0$, and $\Delta_\theta=0.25\pi$, respectively, which leads to distinct increases of localization error.

\begin{figure}[t]
\centering
\includegraphics[scale=0.48]{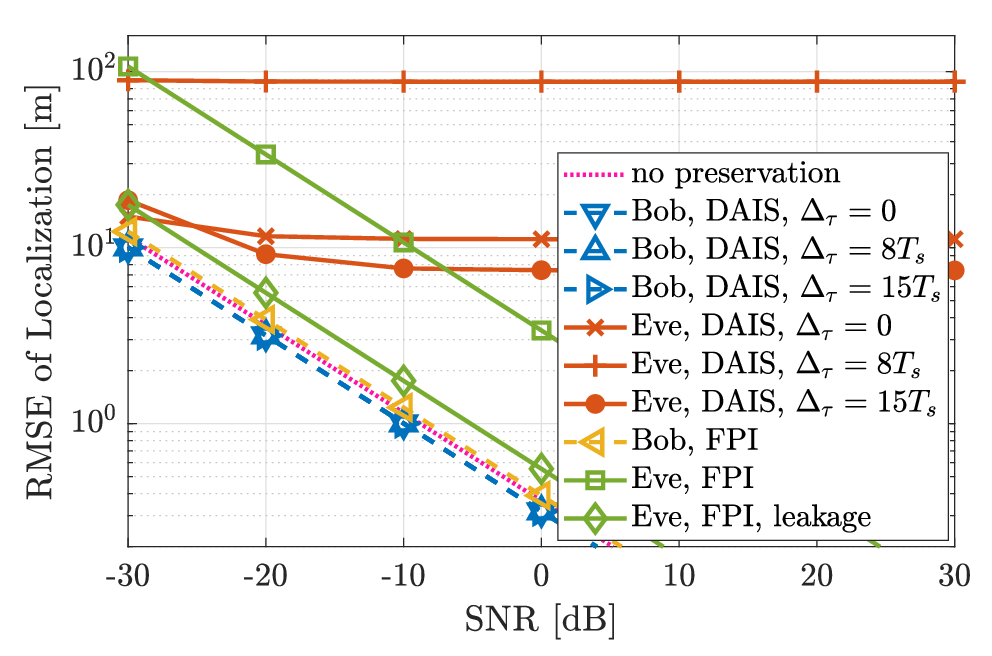}\vspace{-10pt}
\caption{\chc{Localization RMSE for different choices of $\Delta_\tau$ and $\Delta_\theta=0.25\pi$;  the design parameters $\bar\delta_\tau$ and $\bar\delta_{\theta_{\text{TX}}}$ used for the FPI scheme are set according to \cite{li2023fpi}.  {The DAIS and FPI precoders are distinctly different.}}}
\label{fig:rmse_delay}\vspace{-15pt}
\end{figure}

To strengthen the location-privacy enhancement, the shift for the TOAs can be also adjusted, moving pseudo-true location further away from the Alice's true location. As shown in Figure \ref{fig:rmse_delay}, for a given $\Delta_\theta=0.25\pi$, Eve’s localization error can be increased to $87.66$ m at $\operatorname{SNR}=0$ dB if $\Delta_\tau=8T_s$. We note that, similar to the choice of $\Delta_\theta$, the lower bound for Eve's localization error is not monotonically increasing with respect to $\Delta_\theta$ due to phase wrapping. From Figure \ref{fig:rmse_delay}, it can be observed that the largest value of $\Delta_\tau$ in the numerical results, \textit{i.e.,} $\Delta_\tau=15T_s$, does not result in the worst localization accuracy for Eve, where a NLOS path is incorrectly used as the LOS path for localization, yet there is a more than $15$ dB gap as compared with Bob who knows the shared information when $\operatorname{SNR}$ is higher than $-10$ dB\footnote{For the low SNRs, the effect of the geometric mismatch is relatively less significant due to the noise. The exact performance is also affected by other factors, \textit{e.g.,} AODs $\bar{\bm\theta}_{\mathrm{Tx}}$, according to the analysis in \cite{li2023fpi}.}. Furthermore, as compared with the FPI scheme \cite{li2023fpi}, the proposed DAIS design results in a comparable accuracy degradation for Eve\footnote{For a more comprehensive understanding of Eve's localization accuracy with the proposed DAIS in practice, we will evaluate the performance of the specific estimators in the future work.}, with the reduced sensitivity to the {leakage of the precoder structure}.

\vspace{-5pt}
\section{Conclusions}\vspace{-2.5pt}
Location-privacy was enhanced by obfuscating the eavesdropper with a \textit{delay-angle information spoofing} design. A CSI-free framework was proposed for {DAIS}, where the location-relevant delays and angles were shifted, misleading the eavesdropper into estimating \rev{an incorrect} physical position. To this end, a {precoder} was designed. By leveraging the securely shared information (only two parameters), the introduced obfuscation can be \rev{removed} for the legitimate localizer. Theoretical analysis of the localization error with DAIS was provided, validating the efficacy of the proposed scheme. Furthermore, the leakage of the {precoder structure} was analyzed, indicating the further robustness of DAIS relative to the {previously proposed} FPI scheme. With respect to localization accuracy, there was more than $15$ dB accuracy degradation for the eavesdropper at high SNRs.

\vspace{-5pt}

\renewcommand*{\bibfont}{\footnotesize}
\printbibliography
\end{document}